\documentclass[letterpaper, 10 pt, conference]{ieeeconf}  

\IEEEoverridecommandlockouts                              

\overrideIEEEmargins                                      


\usepackage{graphics} 
\usepackage{epsfig} 
\usepackage{amsmath} 
\usepackage{amssymb}  
\usepackage{cite}

\usepackage{epstopdf}

\usepackage{verbatim}

\usepackage{amsthm}

\usepackage{color}
\usepackage{cancel}

\usepackage{tikz}               
\usepgflibrary{arrows}
\usepackage{float}

\usepackage{algpseudocode}
\usepackage{algorithm}
\usepackage{verbatim}

\pagestyle{empty} 

\usepackage{setspace}

\newtheorem{assumption}{\bf Assumption}

\newtheorem{theorem}{\bf Theorem}
\newtheorem{proposition}{\bf Proposition}
\newtheorem{lemma}{\bf Lemma}

\DeclareMathOperator*{\argmin}{arg\,min}

\title{Robust Dual Control based on Gain Scheduling}
\author{Janani Venkatasubramanian, Johannes K\"ohler, Julian Berberich, Frank Allg\"ower
\thanks{The authors are with the Institute for Systems Theory and Automatic Control, University of Stuttgart,70550 Stuttgart, Germany. (email:$\{$janani.venkatasubramanian, johannes.koehler, julian.berberich, frank.allgower\}@ist.uni-stuttgart.de)}%
\thanks{This work was funded by Deutsche Forschungsgemeinschaft (DFG,
German Research Foundation) under Germany’s Excellence Strategy - EXC
2075 - 390740016. The authors thank the International Max Planck Research
School for Intelligent Systems (IMPRS-IS) for supporting Janani Venkatasubramanian and Julian Berberich,
and the International Research Training Group Soft Tissue Robotics (GRK
2198/1).}
}
\allowdisplaybreaks

\begin{document}
\maketitle
\thispagestyle{empty}
\pagestyle{empty}
\begin{abstract}
We present a novel strategy for robust dual control of linear time-invariant systems based on gain scheduling with performance guarantees. This work relies on prior results of determining uncertainty bounds of system parameters estimated through exploration. Existing approaches are unable to account for changes of the mean of system parameters in the exploration phase and thus to accurately capture the \textit{dual} effect. We address this limitation by selecting the future (uncertain) mean as a scheduling variable in the control design. 
The result is a  semi-definite program-based design that computes a suitable exploration strategy and a robust gain-scheduled controller with probabilistic quadratic performance bounds after the exploration phase.
\end{abstract}

\section{Introduction}
The \textit{dual control} paradigm established research interest in simultaneous learning and control of uncertain dynamic systems\cite{feldbaum1960dual}. This pioneering work recognized that control inputs to an uncertain system have a `probing' effect to learn the uncertainty in the system, and a `directing' effect to control the dynamical system.  However, these two effects are naturally conflicting, drawing attention to the trade-off between `exploration' (learning system uncertainty) and `exploitation' (controlling the system to achieve optimal performance), which is also the subject of contemporary literature on Reinforcement Learning \cite{recht2019tour}. Dual control relies on stochastic Dynamic Programming (DP) which is, however, computationally intractable. Either approximations of stochastic DP, or heuristic probing methods are typically adopted to solve the problem of tractability \cite{filatov2000survey}. A detailed survey of dual control methods is provided in \cite{mesbah2018stochastic}.

Early works of implicit dual control methods that involve approximations of DP are based on the \textit{wide-sense} property \cite{doob1953stochastic}, but they require linearization of system dynamics and approximation of the conditional probability of the states by its mean and covariance \cite{tse1973wide, tse1976actively}. These methods were extended to nonlinear systems that could handle input constraints, nonetheless based on some approximations \cite{bayard1985implicit}. This laid the foundation of balancing exploration with caution \cite{bar1976caution}. 

Explicit dual control methods use heuristic probing techniques for active learning without the need to introduce approximations of DP \cite{wittenmark1995adaptive}. Explicit dual control methods are closely related to \textit{Optimal Experiment Design} in closed loop \cite{geversaa2005identification, hjalmarsson2005experiment}. These methods utilize the control inputs to regulate system dynamics and to probe the closed-loop system dynamics by solving a combined problem. This led to application-oriented strategies for dual control that promoted reducing uncertainty that would be beneficial for optimizing cost \cite{annergren2017application}. Some recent application-oriented strategies are discussed in \cite{larsson2016application, heirung2017dual}, however, they consider a special class of systems and their control strategies are not robust to model uncertainties. The `coarse-ID' family of methods study robustness guarantees in system identification based design methods \cite{dean2018regret, dean2019safely, dean2017sample}, however, the control policies are not optimized to balance exploration and exploitation.

Recently, in \cite{umenberger2019robust} a high probability uncertainty bound has been derived that is applicable to both robust control synthesis and \textit{targeted} exploration. This bound is then used in dual control by predicting the influence of the controller on the future uncertainty. In particular, the work in \cite{ferizbegovic2019learning}, building on \cite{barenthin2008identification, umenberger2019robust}, proposes a dual control strategy that minimizes worst-case cost attained by a robust controller that is synthesized with reduced model uncertainty. This dual control strategy with \textit{targeted} exploration seems to perform better than strategies with common greedy random exploration. The approach in \cite{iannelli2019structured}, further extending \cite{umenberger2019robust}, retains an application-oriented strategy, however, adopts a more realistic finite horizon problem setting that captures the trade-offs between exploration and exploitation better. 

The results presented in the methods in \cite{umenberger2019robust, ferizbegovic2019learning, barenthin2008identification, iannelli2019structured} seem to have lower conservatism compared to previous works  \cite{geversaa2005identification, hjalmarsson2005experiment, annergren2017application, larsson2016application}, however, only numerically without any performance guarantees. In particular, the approaches in \cite{umenberger2019robust, ferizbegovic2019learning} do not account for changes in the mean of uncertain system parameters during exploration. Therefore, this paper seeks to address these drawbacks by designing a dual control scheme based on gain scheduling. The mean of future uncertain system parameters is selected as a scheduling variable. This leads to a linear matrix inequality (LMI) based design with guarantees under relaxed assumptions.
In particular, the resulting controller is a state feedback, which depends on the parameter estimates after exploration and thus on the data, and it \emph{guarantees} robust closed-loop performance after an initial exploration phase.

The remainder of the paper is structured as follows.
In Section~\ref{sec:problem} we state the problem setting, and in Section~\ref{sec:prelim} we provide important results from the literature that we employ for our approach.
Section~\ref{sec:dual} contains the proposed dual controller design procedure as well as a proof of robust closed-loop performance guarantees.
Finally, we conclude the paper in Section~\ref{sec:conclusion}.
\section{Problem Statement}\label{sec:problem}
\subsubsection*{Notation} 
The transpose of a matrix $A$ is denoted as $A^\top$. 
The value of the Chi-squared distribution with $n$ degrees of freedom and probability $p$ is denoted as $\chi_n^2(p)$. $\mathcal{L}_2$ denotes the space of square-summable functions.

\subsubsection*{Setting}
Consider a discrete-time linear time-invariant dynamical system of the form \begin{equation}\label{sys}
x_{k+1}=A_{tr}x_k+B_{tr}u_k+w_k,\quad w_k \sim \mathcal{N}(0,\sigma_w^2I)
\end{equation}
where $x_k \in \mathbb{R}^{n_x}$ is the state, $u_k \in \mathbb{R}^{n_u}$ is the control input, and $w_k \in \mathbb{R}^{n_x}$ is the normally distributed process noise. The true values of the system dynamics, $A_{tr}$ and $B_{tr}$, are unknown.
\subsubsection*{Control goal - proposed approach}
The main goal is to design a stabilizing state-feedback $u_k=K_{\text{new}}x_k$ which meets some desired performance specifications. Since the system is unknown, we first apply some exciting input to estimate the parameters and then use bounds on the estimation error to design a robust controller. Our goal is to simultaneously design a suitable exploration strategy and a controller for the system in (\ref{sys}), such that applying the feedback after the exploration phase provides desired quadratic performance guarantees with high probability. The main challenge is to accurately capture the \textit{dual} effect of performance improvement through exploration. We solve this problem by interpreting the new parameter estimate as a scheduling variable, which influences the control law $K_{\text{new}}$, using tools from gain-scheduling. 
The corresponding necessary preliminaries regarding uncertainty bounds for parameter estimation, gain-scheduling and structured exploration are show in Section~\ref{sec:prelim} and the overall approach is presented in Section~\ref{sec:dual}.

\section{Preliminaries}\label{sec:prelim}
\subsection{Uncertainty Bound}
This subsection discusses preliminary results from \cite{umenberger2019robust} adopted in our work that quantify uncertainty in the system dynamics that are estimated, given some data.
The unknown matrices $A_{tr}$ and $B_{tr}$ can be estimated through observed data $\mathcal{D}=\{x_k,u_k\}_{k=0}^N$ of length $N$. In particular, we consider the least squares estimates of $A_{tr}$ and $B_{tr}$, similar to \cite{umenberger2019robust}, which are given by,
\begin{equation}
\label{eq:LMS}
(\hat{A},\hat{B})= \argmin_{A,B} \sum_{k=0}^{N-1} ||x_{k+1}-Ax_k - Bu_k||_2^2.
\end{equation}
%
The following lemma provides a high probability credibility region for the uncertain system matrices.
\begin{lemma}\label{lem1} \cite[Lemma 3.1]{umenberger2019robust}
Given data set $\mathcal{D}$ and $0 < \delta <1$, let 
$D= \frac{1}{\sigma_w^2 c_\delta} \sum_{k=1}^{N-1}\begin{bmatrix}x_k\\u_k\end{bmatrix}\begin{bmatrix}x_k\\u_k\end{bmatrix}^\top$ 
with $c_\delta=\chi_{n_x^2+n_xn_u}^{2}(\delta)$. Suppose we have a uniform prior over the parameters $(A,B)$.
Then, $[A_{tr},B_{tr}] \in \Theta$ with probability $1-\delta$, where
\begin{equation}\label{credregion}
\Theta:=\Bigg\{A,B:\begin{bmatrix}
(\hat{A}-A)^\top\\(\hat{B}-B)^\top
\end{bmatrix}^\top D \begin{bmatrix}
(\hat{A}-A)^\top\\(\hat{B}-B)^\top
\end{bmatrix} \preceq I\Bigg\}.
\end{equation}

\end{lemma}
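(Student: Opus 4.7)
The plan is to derive the result via a Bayesian posterior analysis, exploiting the fact that under a uniform prior and Gaussian noise the unknown parameters have a matrix-variate Gaussian posterior centred at the least-squares estimate. First I would introduce the stacked regressor $z_k=[x_k^\top,u_k^\top]^\top$ and the parameter matrix $\theta=[A,B]$, so that \eqref{sys} becomes $x_{k+1}=\theta z_k+w_k$. Stacking the data into $Z=[z_0,\dots,z_{N-1}]$ and $X_+=[x_1,\dots,x_N]$, the likelihood of $X_+$ given $\theta$ is a product of independent $\mathcal{N}(\theta z_k,\sigma_w^2 I)$ densities. Combining with the uniform prior, the posterior is proportional to the likelihood, which is Gaussian in $\theta$ with mean equal to the least-squares solution $\hat\theta=X_+Z^\top(ZZ^\top)^{-1}$ from~\eqref{eq:LMS} and with the rows of $\theta-\hat\theta$ independent, each with covariance $\sigma_w^2(ZZ^\top)^{-1}$.

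Next I would identify the distribution of the quadratic form in $\Theta$. Writing $M:=\begin{bmatrix}(\hat A-A)^\top\\(\hat B-B)^\top\end{bmatrix}$ and using $ZZ^\top=\sum_{k}z_kz_k^\top$, one checks directly that $M^\top (ZZ^\top) M=(\hat\theta-\theta)(ZZ^\top)(\hat\theta-\theta)^\top$. Whitening the rows via $Y_i:=(ZZ^\top)^{1/2}(\hat\theta-\theta)_{i,:}^\top/\sigma_w$ produces i.i.d.\ standard Gaussian vectors in $\mathbb{R}^{n_x+n_u}$, so
\begin{equation*}
\operatorname{tr}\!\left(\tfrac{1}{\sigma_w^2}(\hat\theta-\theta)(ZZ^\top)(\hat\theta-\theta)^\top\right)=\sum_{i=1}^{n_x}\|Y_i\|_2^2\sim \chi^2_{n_x(n_x+n_u)}.
\end{equation*}
By the definition of $c_\delta=\chi^2_{n_x^2+n_xn_u}(\delta)$ this trace is bounded by $c_\delta$ with probability $1-\delta$, which is precisely $\operatorname{tr}(M^\top D M)\le 1$.

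Finally, I would lift this scalar bound to the matrix inequality defining $\Theta$. Since $M^\top D M\succeq 0$, its eigenvalues are nonnegative and sum to at most one, hence each eigenvalue lies in $[0,1]$, which yields $M^\top D M\preceq I$. Therefore $[A_{tr},B_{tr}]\in\Theta$ with probability at least $1-\delta$, as claimed.

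The main obstacle I anticipate is the careful bookkeeping between the matrix-variate Gaussian posterior and the induced quadratic form — specifically verifying that the Kronecker-structured covariance $\sigma_w^2 I_{n_x}\otimes(ZZ^\top)^{-1}$ produces exactly $n_x(n_x+n_u)=n_x^2+n_xn_u$ degrees of freedom after whitening, and that this aligns with the normalisation chosen for $D$. Once that identification is in place, the passage from the chi-squared tail bound on the trace to the LMI $M^\top D M\preceq I$ via the PSD eigenvalue argument is essentially immediate.
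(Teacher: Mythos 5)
Your proposal is correct; note that the paper does not prove this lemma at all but imports it directly as \cite[Lemma 3.1]{umenberger2019robust}, and your argument (flat prior $\Rightarrow$ matrix-normal posterior centred at the least-squares estimate with row covariance $\sigma_w^2(ZZ^\top)^{-1}$, whitening to get a $\chi^2_{n_x^2+n_xn_u}$ quadratic form, then trace $\le 1$ plus positive semidefiniteness giving $M^\top D M\preceq I$) is essentially the standard proof from that reference. The only minor points to watch are the quantile convention for $\chi^2_n(\delta)$ (it must be the value exceeded with probability $\delta$) and the implicit invertibility of $ZZ^\top$, neither of which is a substantive gap.
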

This lemma provides a data-dependent uncertainty bound. Given this uncertainty bound, the approaches in \cite{umenberger2019robust, ferizbegovic2019learning, iannelli2019structured} synthesize a robust controller by minimizing a worst-case cost. This controller facilitates \textit{targeted} exploration for dual control strategies by predicting the future uncertainty, depending on the exploring controller. However, their approach does not take into account that the estimate of the system parameters are subject to change through the process of exploration.

\subsection{Gain Scheduling Approach}
To account for the change in the estimates of the system parameters, we model the system in (\ref{sys}) as a linear parameter varying (LPV) system. The varying system parameters can be measured after exploration and are selected as the \textit{scheduling block}. The goal is to design a gain-scheduling controller that ensures that the closed-loop system is stable while also satisfying a quadratic performance bound, e.g. $\mathcal{L}_2$ gain, from the disturbance input $w$ to the performance output $z$ with high probability, compare \cite{scherer2001lpv, veenman2014synthesis}. The performance specification is imposed on the channel $w \rightarrow z$, where the performance output $z_k$ at time $k$ is the generalized error that depends on the state, control input and disturbance:
\begin{equation}\label{gz}
z_k=Cx_k + Du_k +D_ww_k,
\end{equation}
where $C$, $D$ and $D_w$ are known. In this setup, since the dynamics are unknown, we have the following assumption from which an initial error bound of the form given in Lemma \ref{lem1} can be derived.
\begin{assumption} \label{a1}
An initial data set $\mathcal{D}_0=\{x_t,u_t\}_{k=-N_0}^{-1}$ is available and a uniform prior over the parameters $(A,B)$ is assumed.
Moreover, it holds that
\begin{align}\label{eq:D0}
D_0:=\frac{1}{\sigma_w^2c_\delta}\sum_{k=-N_0}^{-1}\begin{bmatrix}x_k\\u_k\end{bmatrix}\begin{bmatrix}x_k\\u_k\end{bmatrix}^\top\succ0.
\end{align}
\end{assumption}
From the data $\mathcal{D}_0$, initial estimates of the system parameters $\hat{A}_0$ and $\hat{B}_0$ can be derived. The matrix $D_0$ quantifies the uncertainty associated with these initial estimates for a given probability $1-\delta$ and can be determined from $\mathcal{D}_0$ as given in~\eqref{eq:D0}.
 This initial data can be acquired through some random persistently exciting input, while the later exploration will use the initially obtained model knowledge to provide a more \textit{targeted} exploration strategy. Through the exploration process for $T$ time steps, data $\mathcal{D}_T=\{x_t,u_t\}_{t=0}^{T}$ will be observed. The new estimates $\hat{A}_T$ and $\hat{B}_T$ will be computed from data $\mathcal{D}_0 \cup \mathcal{D}_T$ and made available at time $T$. 
 The matrix $D_T:=D_0+\frac{1}{\sigma_w^2c_\delta}\sum_{k=0}^{T-1}\begin{bmatrix}x_k\\u_k\end{bmatrix}\begin{bmatrix}x_k\\u_k\end{bmatrix}^\top$ will quantify the uncertainty associated with the estimates $\hat{A}_T$ and $\hat{B}_T$. Existing approaches such as \cite{ferizbegovic2019learning} rely on the assumption that $\hat{A}_0\approx\hat{A}_T$ and $\hat{B}_0\approx\hat{B}_T$. In the following, we propose a gain scheduling-based approach to provide closed-loop guarantees for the case $\hat{A}_0\neq \hat{A}_T$ and $\hat{B}_0\neq\hat{B}_T$.
Since the system parameters will be updated through the process of exploration, we proceed now by rewriting (\ref{sys}) as,
\begin{align}\label{gx}
x_{k+1}=&A_{tr}x_k+B_{tr}u_k+w_k\\\nonumber
=&\hat{A}_0x_k+\hat{B}_0u_k+(\hat{A}_T-\hat{A}_0)x_k+(\hat{B}_T-\hat{B}_0)u_k\\\nonumber
& + (A_{tr}-\hat{A}_T)x_k +(B_{tr}-\hat{B}_T)u_k +w_k.
\end{align}

From (\ref{gx}), the scheduling and uncertainty blocks can be selected as,
\begin{equation}
\begin{split}
\Delta_s & = \begin{bmatrix}
\hat{A}_T-\hat{A}_0 & \hat{B}_T-\hat{B}_0
\end{bmatrix}, \\
\Delta_u & = \begin{bmatrix}
A_{tr}-\hat{A}_T & B_{tr}-\hat{B}_T
\end{bmatrix}.
\end{split}
\end{equation}

Since the estimates at time $T$ affect both $\Delta_s$ and $\Delta_u$, the latter blocks can be viewed as time-varying parameters, and the uncertain system combining (\ref{gz}) and (\ref{gx}) can be written as an LPV system:

\begin{equation}\label{newlpv}
\begin{gathered}
\begin{bmatrix}
x_{k+1}\\z_k^s \\ z_k^u \\z_k
\end{bmatrix}   = 
\begin{bmatrix}
\hat{A}_0 & I & I & I & \hat{B}_0\\
\begin{bmatrix}
I\\0
\end{bmatrix} & 0 & 0 & 0 & \begin{bmatrix}
0 \\ I
\end{bmatrix}\\
\begin{bmatrix}
I\\0
\end{bmatrix} & 0 & 0 & 0 & \begin{bmatrix}
0 \\ I
\end{bmatrix}\\
C & 0 & 0 & D_w & D
\end{bmatrix}
\begin{bmatrix}
x_k\\w_k^s\\w_k^u\\w_k\\u_k
\end{bmatrix},\\
w_k^s  = \Delta_s z_k^s,\\
w_k^u  = \Delta_u z_k^u,
\end{gathered}
\end{equation}
where $w^s \rightarrow z^s$ is the scheduling channel and $w^u \rightarrow z^u$ is the uncertainty channel.
After the exploration phase, the control input can now be defined as
\begin{equation}\label{eq:input}
u_k=K x_k +K_s w_k^s.
\end{equation}
The goal is to design $K$ and $K_s$ such that the closed-loop system is stable and the specified performance criterion is met.
The robust gain-scheduling configuration is illustrated in Figure~\ref{fig:GenPlant}.
\begin{figure}[hbtp]
\begin{center}
\includegraphics[width=0.2\textwidth]{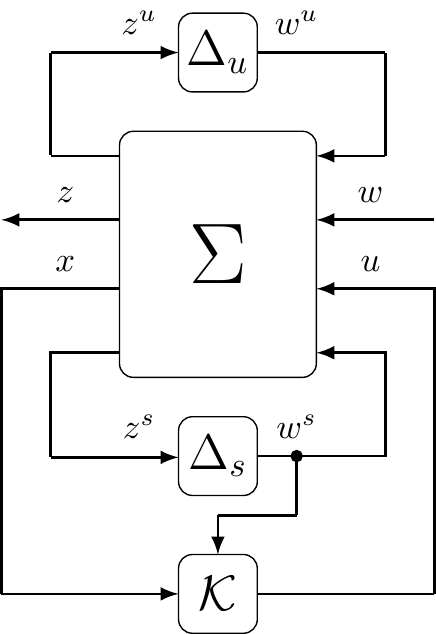}
\end{center}
\caption{Generalized plant view of the robust gain-scheduling problem. }
\label{fig:GenPlant}
\end{figure}
As can be seen in Figure~\ref{fig:GenPlant}, the open-loop system has two uncertainty channels, affected by $\Delta_u$ and $\Delta_s$.
The latter uncertainty $\Delta_s$ is taken into account for the controller via $w_k^s$, compare~\eqref{eq:input}, and hence plays the role of a scheduling variable.
This accounts for changes in the mean of the system parameters through data gathered in the exploration phase that is available after exploration at time $T$, and thereby learning the system dynamics better. The closed-loop system can be written as

\begin{align}\nonumber
&\begin{bmatrix}
x_{k+1}\\z_k^s \\ z_k^u \\z_k
\end{bmatrix}  = 
\begin{bmatrix}
\hat{A}_0 + \hat{B}_0 K & I +\hat{B}_0 K_s & I & I\\
\begin{bmatrix}
I\\K
\end{bmatrix} & \begin{bmatrix}
0 \\ K_s
\end{bmatrix} & 0 & 0 \\
\begin{bmatrix}
I\\K
\end{bmatrix} & \begin{bmatrix}
0 \\ K_s
\end{bmatrix} & 0 & 0 \\ 
C+DK & DK_s & 0 & D_w
\end{bmatrix}
\begin{bmatrix}
x_k\\w_k^s\\w_k^u\\w_k
\end{bmatrix},\\\label{closedloop}
&\qquad\qquad\qquad\qquad\quad w_k^s = \Delta_s z_k^s,\\\nonumber
&\qquad\qquad\qquad\qquad\quad w_k^u = \Delta_u z_k^u.
\end{align}

Given this formulation and suitable bounds on the blocks $\Delta_s$ and $\Delta_u$, we can use established methods from robust control and gain-scheduling to guarantee a desired performance specification. 
In particular, we consider the case where a desired quadratic performance specification is given on the performance channel $w_k\rightarrow z_k$, i.e. for initial condition $x=0$ for all signals $w\in\mathcal{L}_2$ with output $z$ of the closed loop, the following inequality should hold with some $\epsilon>0$:

\begin{align}
\label{eq:quad_perf}
\sum_{k=0}^{\infty}\begin{pmatrix}w_k\\z_k\end{pmatrix}^\top
\begin{pmatrix}Q_p&S_p\\S_p^\top&R_p\end{pmatrix}
\begin{pmatrix}w_k\\z_k\end{pmatrix}\leq -\epsilon \sum_{k=0}^{\infty}w_k^\top w_k,
\end{align}

where $R_p\succ 0$ is assumed. 
We note that standard design goals, such as a desired $\mathcal{L}_2$-gain of $\gamma$ are contained as a special case with $S_p=0$, $R_p=\frac{1}{\gamma}I$ and $Q_p=-\gamma I$ (c.f.~\cite[Prop.~3.12]{scherer2000linear}).
The following lemma provides a matrix inequality to design a robust gain scheduling controller satisfying such a performance specification, given suitable bounds on the blocks $\Delta_s,\Delta_u$. 
\begin{lemma}
\label{lemma:robust}
Suppose $\Delta_s\in\mathbf{\Delta}_s:=\{\Delta: \Delta^\top Q_s \Delta +R_s\succ 0\}$, $\Delta_u\in\mathbf{\Delta}_u:=\{\Delta: \Delta^\top Q_u \Delta  + R_u\succ 0\},$  with $R_u,R_S\succ 0$.
If there exists matrices $K_s,M,N$ and scalars $\lambda_s,\lambda_u>0$ satisfying  the matrix inequality~\eqref{eq:LMI_gain_schedule}, the closed loop~\eqref{closedloop} satisfies the quadratic performance bound~\eqref{eq:quad_perf} with $K=MN^{-1}$, i.e., $u_k=MN^{-1}x_k+K_sw_k^s$. 
\end{lemma}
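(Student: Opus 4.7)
The plan is to establish the performance bound by combining a quadratic dissipation inequality with absorption of the scheduling and uncertainty loops via the S-procedure, and then convexifying the controller synthesis by the standard primal change of variables. First I would postulate a storage function $V(x_k)=x_k^\top P x_k$ with $P\succ 0$ and seek to enforce
\begin{equation*}
V(x_{k+1})-V(x_k)+\begin{pmatrix}w_k\\z_k\end{pmatrix}^\top\begin{pmatrix}Q_p&S_p\\S_p^\top&R_p\end{pmatrix}\begin{pmatrix}w_k\\z_k\end{pmatrix}\le -\epsilon\,w_k^\top w_k
\end{equation*}
along all trajectories of the closed loop \eqref{closedloop}. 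Summing from $k=0$ to $\infty$ with $x_0=0$ and using $V\ge 0$ then yields \eqref{eq:quad_perf}.

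After substituting the closed-loop dynamics, the left-hand side becomes a quadratic form in the extended vector $\xi_k=(x_k^\top,(w_k^s)^\top,(w_k^u)^\top,w_k^\top)^\top$. The signals $w_k^s$ and $w_k^u$ are, however, not free, being constrained by $w_k^s=\Delta_s z_k^s$ and $w_k^u=\Delta_u z_k^u$. These I would absorb via the S-procedure: since $\Delta\in\mathbf{\Delta}_s$ implies $\Delta^\top Q_s\Delta+R_s\succ 0$, setting $w=\Delta z$ yields $\begin{pmatrix}z\\w\end{pmatrix}^\top\begin{pmatrix}R_s&0\\0&Q_s\end{pmatrix}\begin{pmatrix}z\\w\end{pmatrix}>0$, and analogously for the uncertainty channel with $(R_u,Q_u)$. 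Subtracting $\lambda_s>0$ and $\lambda_u>0$ times these inequalities, evaluated on the closed-loop signals $z_k^s,z_k^u$, yields a sufficient unconstrained matrix inequality in $(P,K,K_s,\lambda_s,\lambda_u)$.

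This inequality is quadratic in $x_{k+1}$ through $V(x_{k+1})$. A Schur complement with respect to $P$ followed by a congruence transformation with $\mathrm{diag}(P^{-1},I,I,I,I)$, together with the standard change of variables $N=P^{-1}$ and $M=KN$ (leaving $K_s,\lambda_s,\lambda_u$ unchanged), linearizes all controller-dependent terms: products such as $P(\hat{A}_0+\hat{B}_0 K)$ collapse to $\hat{A}_0 N+\hat{B}_0 M$, and products with the scalar multipliers can be absorbed by a suitable rescaling. The resulting affine expression in $(N,M,K_s,\lambda_s,\lambda_u)$ is precisely the LMI~\eqref{eq:LMI_gain_schedule}. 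Reading the chain of equivalences backwards, feasibility of \eqref{eq:LMI_gain_schedule} implies the dissipation inequality for every admissible $(\Delta_s,\Delta_u)$, hence \eqref{eq:quad_perf} with $K=MN^{-1}$.

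The main obstacle I anticipate is careful bookkeeping of the block structure: $K_s$ appears both in the state-update row (through $\hat{B}_0 K_s$) and in the rows generating $z_k^s$ and $z_k^u$, while the two scalar multipliers $\lambda_s,\lambda_u$ must be introduced consistently so that the congruence transformation and the change of variables simultaneously convexify the feedback synthesis and the S-procedure multipliers. Guided by the standard full-block S-procedure treatment in \cite{scherer2001lpv,veenman2014synthesis}, this step is mechanical but requires verification that no residual bilinear dependencies in $(N,M,K_s,\lambda_s,\lambda_u)$ remain.
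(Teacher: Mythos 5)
Your proposal is correct and takes essentially the same route as the paper: the paper's proof simply traverses your chain in the opposite direction, undoing the Schur complement, the congruence with $\mathrm{diag}(N^{-1},I,I,I)$ and the substitution $X=N^{-1}$, $K=MN^{-1}$ to recover the robust quadratic-performance analysis inequality \eqref{eq:biginequality}, and then invokes \cite[Thm.~2]{scherer2001lpv}, whose content is exactly the dissipation-plus-S-procedure argument you spell out. Two details to tighten when writing it up: the multiplier terms $\lambda_s\,\mathrm{diag}(Q_s,R_s)$ and $\lambda_u\,\mathrm{diag}(Q_u,R_u)$ must be \emph{added} to the left-hand side of the dissipation inequality (they are nonnegative along the constrained signals $w^s=\Delta_s z^s$, $w^u=\Delta_u z^u$, which is what makes the unconstrained inequality sufficient, and this matches the signs in \eqref{eq:LMI_gain_schedule}), and the resulting inequality is affine in $(N,M,K_s)$ only for fixed $\lambda_s,\lambda_u$, since $\tfrac{1}{\lambda_s}R_s^{-1}$ and $\tfrac{1}{\lambda_u}R_u^{-1}$ enter nonlinearly---neither point affects the validity of your argument for the lemma as stated.
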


\begin{proof}
The proof follows the arguments in~\cite{scherer2001lpv} for LPV control. 
First, note that the set definitions $\mathbf{\Delta}_s,\mathbf{\Delta}_u$ are linear in $Q_s,Q_u,R_s,R_u$ and thus remain valid if $(Q_s,R_s)$ and ($Q_u,R_u)$ are multiplied by some positive scalar $\lambda_s,\lambda_u>0$, respectively. Define $X=N^{-1}$ and $K=M N^{-1}$. The Schur complement of the LMI ~\eqref{eq:LMI_gain_schedule} is multiplied from left and right by $\text{diag}(N^{-1},I,I,I)$ to obtain

\begin{small}
\begin{equation}\label{eq:biginequality}
\begin{split}
\left[ \begin{array}{c}
*\\ *\\ \hline
*\\ *\\ \hline
*\\ *\\ \hline
*\\ *
\end{array}\right]^\top &
\left[
\begin{array}{c|c|c|c}
\begin{matrix} -X&0\\0&X \end{matrix} & 
\begin{matrix} 0&0\\0&0\end{matrix} &
\begin{matrix} 0&0\\0&0 \end{matrix} &
\begin{matrix} 0&0\\0&0 \end{matrix} \\
\hline
\begin{matrix} 0&0\\0&0 \end{matrix} &
\begin{matrix} \lambda_s P_s \end{matrix} & 
\begin{matrix} 0&0\\0&0 \end{matrix} &
\begin{matrix} 0&0\\0&0 \end{matrix} \\
\hline
\begin{matrix} 0&0\\0&0 \end{matrix} &
\begin{matrix} 0&0\\0&0 \end{matrix} &
\begin{matrix} \lambda_u P_u \end{matrix} & 
\begin{matrix} 0&0\\0&0 \end{matrix} \\
\hline
\begin{matrix} 0&0\\0&0 \end{matrix} &
\begin{matrix} 0&0\\0&0 \end{matrix} &
\begin{matrix} 0&0\\0&0 \end{matrix} &
\begin{matrix} Q_p&S_p\\S_p^\top&R_p \end{matrix}
\end{array}
\right] \\
\times & \left[ \begin{array}{cccc}
I&0&0&0 \\
\hat{A}_0+\hat{B}_0 K& I+ \hat{B}_0 K_s&I&I \\
\hline
0&I&0&0\\
\begin{bmatrix} I\\K \end{bmatrix} & \begin{bmatrix} 0\\K_s \end{bmatrix} & 0 & 0 \\
\hline
0&0&I&0 \\
 \begin{bmatrix} I\\K \end{bmatrix} & \begin{bmatrix} 0\\K_s \end{bmatrix} & 0 & 0\\
\hline
0&0&0&I\\
C+DK & DK_s & 0 & D_w\\
\end{array} \right] \prec 0,
\end{split}
\end{equation}
\end{small}
where $P_s=\begin{bmatrix}Q_s&0\\0&R_s\end{bmatrix}$ and $P_u=\begin{bmatrix}Q_u&0\\0&R_u
\end{bmatrix}$.

Using~\cite[Thm.~2]{scherer2001lpv}, quadratic performance is guaranteed if there exists a positive definite matrix $X=X^\top \succ 0$ satisfying the matrix inequality~\eqref{eq:biginequality}.\end{proof}

We note that for $\lambda_s,\lambda_u$ constant, inequality~\eqref{eq:LMI_gain_schedule} is an LMI and thus can be efficiently solved using line-search like techniques for $(\lambda_s,\lambda_u)\in\mathbb{R}^2$.

\begin{table*}
\begin{small}
\begin{align}
\label{eq:LMI_gain_schedule}
\begin{pmatrix}
\begin{array}{c|c}
\begin{bmatrix}
-N & 0 & 0 & (CN+DM)^\top S_p^\top\\
0 & \lambda_s Q_s & 0 & (DK_s)^\top S_p^\top \\
0 & 0 & \lambda_u Q_u & 0 \\
S_p(CN+DM) & S_p DK_s & 0 & Q_p +D_w^\top S_p^\top + S_pD_w
\end{bmatrix}&\star\\\hline
\begin{bmatrix}
\hat{A}_0N+\hat{B}_0 M& I+ \hat{B}_0 K_s&I&I \\
\begin{bmatrix} N\\M \end{bmatrix} & \begin{bmatrix} 0\\K_s \end{bmatrix} & 0 & 0 \\
 \begin{bmatrix} N\\M \end{bmatrix} & \begin{bmatrix} 0\\K_s \end{bmatrix} & 0 & 0\\
CN+DM & DK_s & 0 & D_w\\
\end{bmatrix}
&\begin{bmatrix}
-N & 0 & 0 & 0\\
0 & -\frac{1}{\lambda_s}R_s^{-1} & 0 & 0\\
0 & 0 & -\frac{1}{\lambda_u}R_u^{-1} & 0\\
0 & 0 & 0 & -R_p^{-1}
\end{bmatrix}
\end{array}
\end{pmatrix}
\prec 0.
\end{align}
\end{small}
\end{table*}

\subsection{Exploration and parameter estimation bounds}

In this paper, we consider a dual control objective where, during an initial exploration phase, uncertainty is reduced in order to design a robust controller based on the data collected during exploration.
The exploration controller is computed such that it excites the system sufficiently with a minimal \emph{robust} LQR cost, based on initial parameter estimates.
The exploration controller takes the form
\begin{equation}\label{eq:exploration_input}
u_k=K_ex_k + e_k,\quad k=0,\dots, T
\end{equation}
with a robustly stabilizing $K_e$ and noise $e_k \sim \mathcal{N}(0,\Sigma)$.
Based on initial estimates of the system dynamics and the associated uncertainty bound, $K_e$ and $\Sigma$ are computed such that they minimize a robust LQR cost $\sum_{k=0}^{\infty}x_k^\top Qx_k+u_k^\top Ru_k$.
Similar to~\cite{ferizbegovic2019learning}, this robust LQR cost can be computed as the $\mathcal{H}_2$-norm of the uncertain closed loop system $x_{k+1}=(\hat{A}_0+\hat{B}_0K_e)x_k+w_k$ and $y_k=\begin{bmatrix}
Q^{\frac{1}{2}}\\R^{\frac{1}{2}}K_e
\end{bmatrix}x_k$.
To be more precise, the robust LQR cost of the exploration controller is computed as
\begin{equation}\label{eq:robust_LQR}
\begin{split}
\underset{t_e,Z_e,Y_e,W_e}{\min} & \quad\text{tr } Y_e\\
\text{s.t.}& \quad S_1(W_e,Y_e,Z_e)\succeq 0,~t_e>0\\
& \quad S_e(t_e,Z_e,W_e,\Sigma,D_0,\hat{A}_0,\hat{B}_0) \succeq 0
\end{split}
\end{equation}
where $S_1,S_e$ are defined as
\begin{align*}
S_1(W_e,Y_e,Z_e)&=\left[ \begin{array}{c|c}
Y_e & \begin{matrix}
Q^{\frac{1}{2}}W_e\\R^{\frac{1}{2}} Z_e^\top
\end{matrix} \\
\hline
\begin{matrix}
W_eQ^{\frac{1}{2}} & Z_e R^{\frac{1}{2}}
\end{matrix} & W_e
\end{array} \right],\\
S_e(t_e,Z_e,W_e,\Sigma)&=\begin{bmatrix}
H_e & F_e & G_e\\
F_e^\top & C_e-t_eI & 0\\
G_e^\top & 0 & t_eD_0
\end{bmatrix}
\end{align*}
with 
\begin{align*}
    &H_e=\begin{bmatrix} W_e&0\\0&\Sigma \end{bmatrix},\>\>
    F_e=\begin{bmatrix}
    W_e\hat{A}_0^\top + Z_e\hat{B}_0^\top\\ \Sigma \hat{B}_0^\top
    \end{bmatrix},\\
    &G_e = \begin{bmatrix}
    -W_e&-Z_e\\0&-\Sigma
    \end{bmatrix},\>\>Z_e=W_eK_e^\top,\>\>C_e=W_e-\sigma_w^2I.
\end{align*}
A more detailed derivation and additional explanations are provided in~\cite{ferizbegovic2019learning}.
In~\eqref{eq:robust_LQR}, $W_e$ denotes the controllability Gramian, which plays an essential role to propagate the influence of the exploration phase on the parameters estimates based on the new data $\mathcal{D}_T$.
Similar to~\cite{ferizbegovic2019learning}, we make the following assumption.
\begin{assumption}\label{a2} 
For the system (\ref{sys}) evolving under an exploration controller~\eqref{eq:exploration_input} with $K_e,W_e$ satisfying the constraints in~\eqref{eq:robust_LQR}, the empirical covariance can be approximated via a solution $W_e$ of~\eqref{eq:robust_LQR} as
\begin{equation}\label{eq:a2}
\sum_{t=0}^{T-1}\begin{bmatrix}
x_t\\ u_t
\end{bmatrix}
\begin{bmatrix}
x_t\\ u_t
\end{bmatrix}^\top \approx T\begin{bmatrix}
W_e&W_eK_e^\top \\
K_e W_e & K_e W_e K_e^\top + \Sigma
\end{bmatrix}.
\end{equation}
\end{assumption}
Assumption~\ref{a2} implies that the empirical covariance can be approximated via its stationary distribution $\Sigma_{xx}=\mathbb{E}[xx^\top]$, which is in turn approximated by the worst-case state covariance, i.e., by $W_e$ satisfying~\eqref{eq:robust_LQR}.
Clearly, this is an assumption that is not guaranteed to hold, but it is usually a good approximation.
Based on~\eqref{eq:a2}, the uncertainty bound $D_T$ can be computed as
\begin{equation}
\label{eq:explore_2}
D_T=D_0+\frac{T}{\sigma_w^2 c_\delta} \begin{bmatrix}
W_e& Z_e\\Z_e^\top & Z_e^\top W_e^{-1} Z_e + \Sigma
\end{bmatrix}.
\end{equation}
Since the uncertainty bound $D_T$ is a nonlinear function of $Z_e$ and $W_e$, we compute an affine lower bound of it as in \cite[Lemma 1]{ferizbegovic2019learning}
\begin{equation}
\label{eq:explore_3}
\begin{bmatrix}
W_e & Z_e \\ Z_e^\top & Z_e^\top W_e^{-1}Z_e
\end{bmatrix} \succeq
\begin{bmatrix} W_e \\ Z_e^\top \end{bmatrix} V + V^\top \begin{bmatrix} W_e \\ Z_e^\top \end{bmatrix}^\top - V^\top W_e V.
\end{equation}
For a fixed $V$, this leads to an affine lower bound on $D_T$, denoted as $\overline{D}_T$, via~\eqref{eq:explore_2}. The bound is tight when $\begin{bmatrix}W_e&Z_e \end{bmatrix}=W_e V$, so it is optimal to choose $V=W_e^{-1}	\begin{bmatrix}W_e&Z_e \end{bmatrix}=\begin{bmatrix}I&K_e^\top \end{bmatrix}$.
However, $K_e$ is not known at this point, and hence a candidate $K_0$ is used instead to compute $V$, which can be computed, e.g., based on a robust LQR for the nominal model~\cite{ferizbegovic2019learning}.

An essential ingredient of the proposed approach is the handling of the uncertainty bounds $D_0,D_T$, which influence the uncertain parameters $\Delta_u,\Delta_s$ in~\eqref{closedloop}, as derived in the following proposition.
\begin{proposition}\label{p1}
Let Assumption \ref{a1} hold, where $\hat{A}_0$ and $\hat{B}_0$ are the initial estimates. Let
\begin{equation*}
\Delta_0=\begin{bmatrix}
\hat{A}_0 - A_{tr} & \hat{B}_0 - B_{tr}
\end{bmatrix}.
\end{equation*}
Then, with probability $1-\delta$, we have 
\begin{equation}\label{delta0bound}
\Delta_0 \in \mathbf{\Delta}_0:=\{\Delta_0: \Delta_0^\top \Delta_0 \preceq D_0^{-1}\},
\end{equation}
and with probability $1-\delta$, we have
\begin{equation}\label{ubound}
\Delta_u \in \mathbf{\Delta}_u :=\{\Delta_u:\Delta_u^\top \Delta_u \preceq D_T^{-1}\}.
\end{equation}
If (\ref{delta0bound}) and (\ref{ubound}) hold, then for any $\epsilon>0$ we have
\begin{equation}
\label{deltasbound}
\begin{split}
\Delta_s \in \mathbf{\Delta}_s =\Bigg\{\Delta_s:\Delta_s^\top \Delta_s \preceq & \left( 1+ \frac{1}{\epsilon}\right) D_0^{-1} \\ & + \left( 1+\epsilon \right) D_T^{-1}\Bigg\}.
\end{split}
\end{equation}
\end{proposition}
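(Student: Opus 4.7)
The plan is to derive the three bounds sequentially, with the first two following almost immediately from Lemma~\ref{lem1} and the third from an application of Young's inequality for matrices.

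For~\eqref{delta0bound}, I would apply Lemma~\ref{lem1} directly to the initial dataset $\mathcal{D}_0$ (noting that Assumption~\ref{a1} guarantees a uniform prior and $D_0\succ 0$). The lemma yields, with probability $1-\delta$, a bound of the form $\Delta_0 D_0 \Delta_0^\top\preceq I$ (once one identifies the stacked matrix in~\eqref{credregion} with $\Delta_0^\top$). To reach the stated form $\Delta_0^\top\Delta_0\preceq D_0^{-1}$, I would substitute $\tilde\Delta_0:=\Delta_0 D_0^{1/2}$ and observe that $\tilde\Delta_0\tilde\Delta_0^\top\preceq I$ is equivalent to $\tilde\Delta_0^\top\tilde\Delta_0\preceq I$ (both are equivalent to all singular values of $\tilde\Delta_0$ being at most one), which upon left/right multiplication by $D_0^{-1/2}$ gives exactly $\Delta_0^\top\Delta_0\preceq D_0^{-1}$. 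The bound~\eqref{ubound} follows by the same argument applied to the combined dataset $\mathcal{D}_0\cup\mathcal{D}_T$, which yields the matrix $D_T$.

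For~\eqref{deltasbound} the key observation is the simple algebraic identity
\begin{equation*}
\Delta_s = \begin{bmatrix}\hat{A}_T-\hat{A}_0 & \hat{B}_T-\hat{B}_0\end{bmatrix}=-\Delta_0-\Delta_u,
\end{equation*}
obtained by adding and subtracting $A_{tr},B_{tr}$ in the definitions of $\Delta_s,\Delta_0,\Delta_u$. Expanding the quadratic,
\begin{equation*}
\Delta_s^\top\Delta_s=\Delta_0^\top\Delta_0+\Delta_u^\top\Delta_u+\Delta_0^\top\Delta_u+\Delta_u^\top\Delta_0,
\end{equation*}
and applying Young's inequality in its matrix form (which follows from $(\sqrt{\epsilon}\,\Delta_u\pm \tfrac{1}{\sqrt{\epsilon}}\Delta_0)^\top(\sqrt{\epsilon}\,\Delta_u\pm\tfrac{1}{\sqrt{\epsilon}}\Delta_0)\succeq 0$), I obtain
\begin{equation*}
\Delta_0^\top\Delta_u+\Delta_u^\top\Delta_0\preceq \tfrac{1}{\epsilon}\Delta_0^\top\Delta_0+\epsilon\,\Delta_u^\top\Delta_u.
\end{equation*}
Substituting and then inserting~\eqref{delta0bound} and~\eqref{ubound} yields $\Delta_s^\top\Delta_s\preceq(1+1/\epsilon)D_0^{-1}+(1+\epsilon)D_T^{-1}$ as claimed.

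The only nontrivial step is the rewriting from $\Delta D\Delta^\top\preceq I$ to $\Delta^\top\Delta\preceq D^{-1}$, since $\Delta$ is non-square; I expect the clearest presentation is the SVD/substitution argument above, after which the Young's inequality step is routine. Note also that~\eqref{delta0bound} and~\eqref{ubound} each individually hold with probability $1-\delta$, and since~\eqref{deltasbound} is derived deterministically from both of them, the corresponding probabilistic statement is implicit in the conditional phrasing of the proposition.
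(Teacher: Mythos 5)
Your proposal is correct and follows essentially the same route as the paper: both obtain \eqref{delta0bound} and \eqref{ubound} from Lemma~\ref{lem1}, write $\Delta_s=-(\Delta_0+\Delta_u)$, expand the quadratic, and apply Young's inequality to bound the cross terms. The only difference is that you spell out the passage from the credibility-region form $\Delta D\Delta^\top\preceq I$ to $\Delta^\top\Delta\preceq D^{-1}$ via the singular-value argument, a step the paper's proof leaves implicit.
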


\begin{proof}
Let
\begin{equation}
\Delta_0=\begin{bmatrix}
\hat{A}_0 - A_{tr} & \hat{B}_0 - B_{tr}
\end{bmatrix}.
\end{equation}
By Lemma \ref{lem1}, with $0 < \delta <1$, the following hold with probability $1-\delta$.
\begin{equation}
\begin{split}
\Delta_0^\top \Delta_0 & \preceq D_0^{-1},\\
\Delta_u^\top \Delta_u & \preceq D_T^{-1}.
\end{split}
\end{equation}

The scheduling block can now be represented as
\begin{equation}
\Delta_s=-(\Delta_0+\Delta_u).
\end{equation}

To derive a probabilistic bound for $\Delta_s$, we have
\begin{equation}
\begin{split}
\Delta_s^\top \Delta_s & = (\Delta_0+\Delta_u)^\top (\Delta_0+\Delta_u)\\
&=\Delta_0^\top\Delta_0 +\Delta_0^\top\Delta_u +\Delta_u^\top\Delta_0 + \Delta_u^\top\Delta_u\\
&\preceq \left( 1+ \frac{1}{\epsilon}\right) \Delta_0^\top\Delta_0 + \left( 1+\epsilon \right)\Delta_u^\top\Delta_u.
\end{split}
\end{equation}
The third inequality follows by Young's inequality which implies that for every $\epsilon > 0$, $\Delta_0^\top\Delta_u +\Delta_u^\top\Delta_0 \leq \left(\frac{1}{\epsilon}\right) \Delta_0^\top\Delta_0 + \epsilon \Delta_u^\top\Delta_u $.
Therefore, the bound for $\Delta_s$ is
\begin{align*}
\Delta_s^\top \Delta_s \preceq \left( 1+ \frac{1}{\epsilon}\right) D_0^{-1} + \left( 1+\epsilon \right) D_T^{-1}.&\qedhere
\end{align*}
\end{proof}
The relation between the different sets is visualized in Figure~\ref{fig:Delta}.
Using Lemma~\ref{lem1} with the initial data $\mathcal{D}_0$, we know that the true system parameters $\theta_{tr}$ are in some ellipse $\mathbf{\Delta}_0$ around the initial parameter estimate $\hat{\theta}_0$. Using Lemma~\ref{lem1} after the exploration, we know that the true parameter $\theta_{tr}$ is contained in an ellipse $\mathbf{\Delta}_u$ around the new point estimate $\hat{\theta}_T$. Combining both of these bounds we know that the new point estimate $\hat{\theta}_T$, and thus the scheduling variable $\Delta_s$, is contained in an ellipse $\mathbf{\Delta}_s$ around the initial point estimate.
\begin{figure}[hbtp]
\begin{center}
\includegraphics[width=0.3\textwidth]{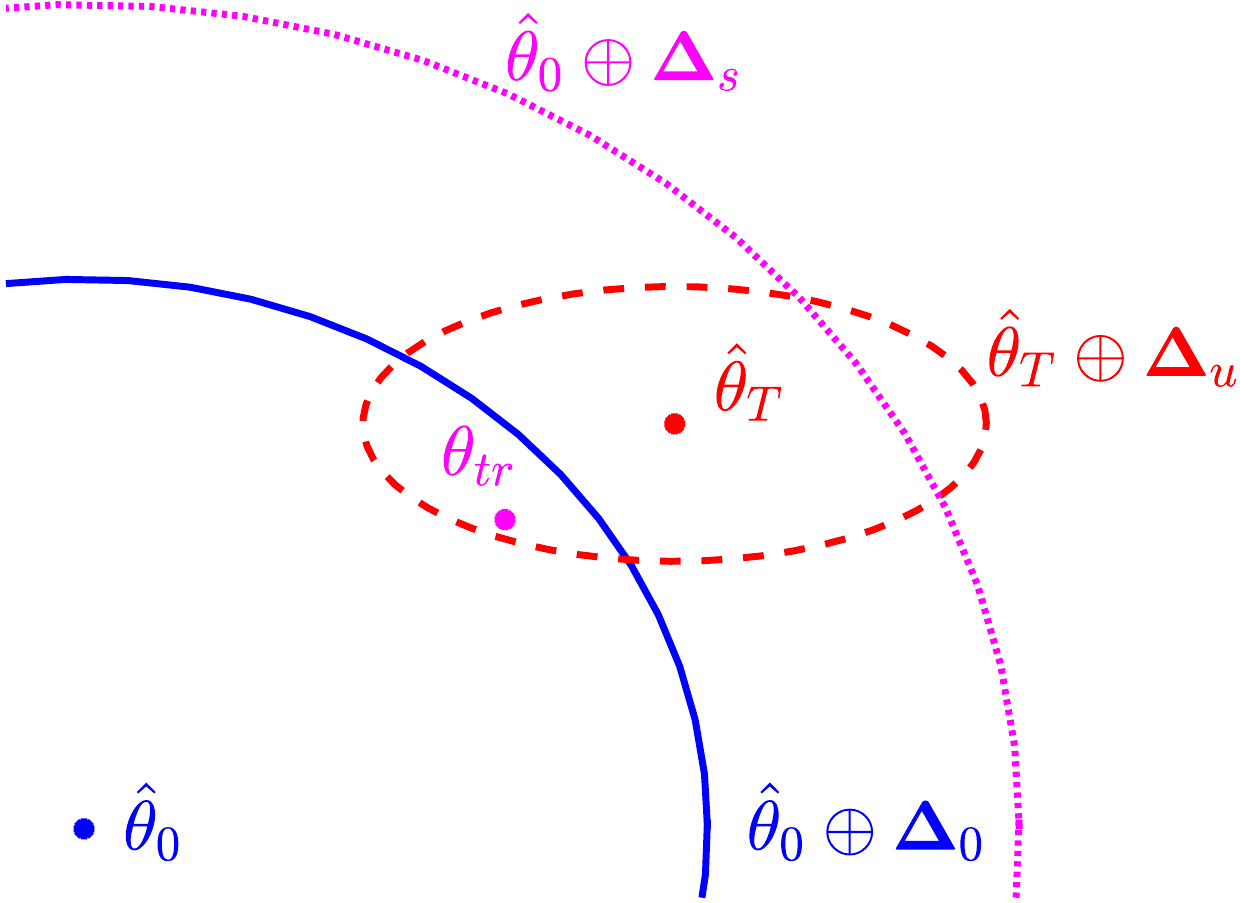}
\end{center}
\caption{Illustration of the sets $\mathbf{\Delta}_0,\mathbf{\Delta}_s,\mathbf{\Delta}_u$ (shifted w.r.t $\hat{\theta}_0,\hat{\theta}_T$) with the true parameters $\theta_{tr}$, the initial parameter estimate $\hat{\theta}_0$ and the estimate resulting from the exploration $\hat{\theta}_T$. }
\label{fig:Delta}
\end{figure}


\section{Dual Control}
\label{sec:dual}
In this section, the proposed robust dual control strategy is presented. 
The overall proposed algorithm is summarized in Algorithm~\ref{alg:main} in Section~\ref{sec:dual_1} and its theoretical properties are analyzed in Theorem~\ref{thm:main} in Section~\ref{sec:dual_2}. 
Finally, Section~\ref{sec:dual_3} discusses the approach relative to existing methods.
\subsection{Proposed Algorithm}
\label{sec:dual_1}
The overall goal of the proposed dual approach is to design a structured exploration strategy, such that the designed controller satisfies some desired quadratic performance~\eqref{eq:quad_perf} with high probability $1-\delta\in(0,1)$.
Since we assume that we do not have a prior on the model, we start with some random (ideally persistently exciting) exploration over $N_0\in\mathbb{N}$ steps to obtain initial data $\mathcal{D}_0$ (Assumption ~\ref{a1}). 
Then we use the least mean squares estimate~\eqref{eq:LMS} to obtain initial estimates $\hat{A}_0,\hat{B}_0$ and a high probability uncertainty bound $D_0^{-1}$ (c.f. Lemma~\ref{lem1} and~\eqref{delta0bound} in Prop.~\ref{p1}). 
Before solving the dual control problem, we first seek some initial candidate feedback $K_0\in\mathbb{R}^{m\times n}$. 
This step is in principle not necessary, however, in case $K_0-K_e$ is small, this step can greatly reduce the conservatism in the convex relaxation~\eqref{eq:explore_3}. 
Thus, as also suggested in~\cite[Sec.~IV.C]{ferizbegovic2019learning}, we compute $K_0$ as a robust LQR for the nominal model~\eqref{eq:robust_LQR}. 

In order to emphasize its dependence on the variables and uncertainty parameters, we denote the matrix in~\eqref{eq:LMI_gain_schedule} with $Q_s=-I,Q_u=-I$ by
\begin{align*}
S_2(K_s,M,N,\lambda_s,\lambda_u,R_s^{-1},R_{u}^{-1}).
\end{align*}

Furthermore, the satisfaction of~\eqref{eq:LMI_gain_schedule} is ensured if 
$\Delta_s^\top \Delta_s \prec D_s^{-1}$, $\Delta_u^\top \Delta_u\prec D_T^{-1}$, where $D_s$ denotes the uncertainty bound associated with $\Delta_s$. Since the set inclusions~\eqref{delta0bound}--\eqref{deltasbound} from Proposition~\ref{p1} hold, it suffices to show 
\begin{align}
\label{eq:D_s_intermediate_1}
D_s^{-1}\succ \frac{1+\epsilon}{\epsilon}D_0^{-1}+(1+\epsilon)D_T^{-1},   
\end{align}
By applying the Woodbury matrix identity to~\eqref{eq:D_s_intermediate_1} and multiplying by (1+$\epsilon$), we get
\begin{align}
\label{eq:D_s_intermediate2}
    (1+\epsilon){D}_s \succ \epsilon D_0 - \epsilon D_0 \left( D_T + \epsilon D_0 \right)^{-1} \epsilon D_0. 
\end{align}
Applying the Schur complement to~\eqref{eq:D_s_intermediate2} results in the following equivalent LMI
\begin{equation*}
S_3(\epsilon,D_0,D_T,D_s)=
    \begin{bmatrix}
      \epsilon D_0-(1+\epsilon)D_s & \epsilon D_0\\
     \epsilon D_0 &  D_T + \epsilon D_0 
    \end{bmatrix}.
\end{equation*}

Given $K_0$, $D_0$, $\hat{A}_0$, $\hat{B}_0$, $\delta$, $Q_p$, $S_p$, $R_p$, $T$ and some fixed $\epsilon$, $t_e$, $\lambda_s$, $\lambda_u>0$, we solve the following semi-definite program (SDP), which is a combination of the robust gain-scheduling problem~\eqref{eq:LMI_gain_schedule} and the exploration inequalities~\eqref{eq:robust_LQR}, \eqref{eq:explore_2}:
\begin{subequations}
\label{eq:opt_problem}
\begin{align}
&\underset{\begin{subarray}
\{W_e, Z_e,Y_e,\Sigma\\
{K_s, M, N, \overline{D}_T,D_s}
\end{subarray}}{\inf}  \quad\text{tr}(Y_e) \\\label{eq:opt_problem1}
\text{s.t. }
& \quad S_1(W_e,Y_e,Z_e)\succeq 0\\\label{eq:opt_problem3}
& \quad S_e(t_e,Z_e,W_e,\Sigma,D_0,\hat{A}_0,\hat{B}_0) \succeq 0\\
\label{S3}& \quad S_2(K_s,M,N,\lambda_s, \lambda_u, D_s,\overline{D}_T) \prec 0\\
\label{eq:opt_problem_D_s}
& \quad S_3(\epsilon,D_0,\overline{D}_T,D_s) \succ 0\\\nonumber
& \quad \frac{T}{\sigma_w^2 c_\delta}\begin{bmatrix}
 W_e & Z_e\\Z_e^\top & Z_e^\top K_0^\top + K_0 Z_e - K_0 W_e K_0^\top + \Sigma\end{bmatrix}\\
\label{eq:opt_problem_D_T}
 &\qquad\quad+D_0-\overline{D}_T\succ0.
\end{align}
\end{subequations}

Solving this optimization problem directly leads to the controller parameters required for the implementation, i.e., the exploration controller $K_e=Z_e^\top W_e^{-1}$, the exploration variance $\Sigma$, and the robust gain scheduled controller parameters $K_s$ and $K=MN^{-1}$. 
Essentially,~\eqref{eq:opt_problem1}-\eqref{eq:opt_problem3} are needed to compute the cost $\text{tr}(Y_e)$ of the controller during the exploration phase, compare~\eqref{eq:robust_LQR}.
Moreover,~\eqref{S3} contains the main robust control LMI (compare Lemma~\ref{lemma:robust}) which returns a common Lyapunov function $N\succ0$ as well as controller parameters $M,K_s$ which guarantee robust performance of the closed loop~\eqref{closedloop} for all uncertainties $\Delta_u,\Delta_s$ satisfying $\Delta_u^\top\Delta_u\prec D_T^{-1}$ and $\Delta_s^\top\Delta_s\prec D_s^{-1}$.
In this context, (a bound on) the data obtained during exploration is approximated via $\overline{D}_T-D_0$, which implies that the uncertainties for robust controller design, i.e., the values $\overline{D}_T$ and $D_s$, in turn depend on the controller during the exploration phase $K_e$ through~\eqref{eq:opt_problem_D_s} and~\eqref{eq:opt_problem_D_T}.
This couples the exporation and robust control, thus resulting in a \emph{dual effect} of the proposed controller.

Regarding the computational complexity of~\eqref{eq:opt_problem}, we note that for $\epsilon$, $t_e$, $\lambda_s$, $\lambda_u>0$ fixed, this is a standard (small-scale) semi-definite program (SDP), which can be efficiently solved. 
Hence, the optimization problem can be solved by using a line-search like procedure (or gridding) for the variables $\epsilon$, $t_e$, $\lambda_s$, $\lambda_u>0$ and solving the SDP in an inner loop.


After solving~\eqref{eq:opt_problem}, we apply the targeted exploration sequence $u_t=K_ex_t+e_t$, $e_t\sim\mathcal{N}(0,\Sigma)$ for $t=0,\dots,T$.
Next, with the new data $\mathcal{D}_0\cup\mathcal{D}_T$, we use the least mean square estimate~\eqref{eq:LMS} to obtain an improved/updated estimate $\hat{A}_T$, $\hat{B}_T$ and a new bound $D_T^{-1}$ on the uncertainty. 
Then, we can directly apply the designed gain-scheduling controller with the new scheduling variable $\Delta_s=\begin{pmatrix}\hat{A}_T-\hat{A}_0&\hat{B}_T-\hat{B}_0\end{pmatrix}$. Using~\eqref{closedloop}, this controller can be explicitly written as a state feedback control law $K_{\text{new}}$ using
\begin{align}
\label{eq:K_new}
u_k&=Kx_k+K_sw_k^s\\
&=Kx_k+K_s((\hat{A}_T-\hat{A}_0)x_k+(\hat{B}_T-\hat{B}_0)u_k)\nonumber\\
&=(I_m-K_s(\hat{B}_T-\hat{B}_0))^{-1}(K+K_s(\hat{A}_T-\hat{A}_0))x_k\nonumber\\\nonumber
&=:K_{\text{new}}x_k.
\end{align}
We note that $(I-K_s(\hat{B}_T-\hat{B}_0))$ is non-singular (with high probability) due to the equivalence in~\cite[Thm.~2]{scherer2001lpv}. The overall procedure is summarized in Algorithm~\ref{alg:main}. 
\begin{algorithm}[H]
\caption{Dual control using gain-scheduling}
\label{alg:main}
\begin{algorithmic}[1]
\State Specify confidence level $\delta\in(0,1)$, quadratic performance ($Q_p,\,S_p,\,R_p$)~\eqref{eq:quad_perf}, exploration cost $Q,\,R\succ 0$, initial and targeted exploration length $N_0,\,T$.
\State Random exploration to obtain initial data $\mathcal{D}_0$ (Ass.~\ref{a1}).
\Statex $\Rightarrow$ Initial estimates $\hat{A}_0,\,\hat{B}_0$ and uncertainty bound $D_0^{-1}$,
\Statex compute robust LQR controller $K_0$~\eqref{eq:robust_LQR}.
\State Solve the optimization problem~\eqref{eq:opt_problem} for different values $\epsilon,\,t_e,\,\lambda_s,\,\lambda_u>0$ (e.g., via line-search in an outer loop).
\Statex $\Rightarrow$ Exploration sequence $K_e=Z^\top W^{-1}$, $\Sigma$ and gain-scheduled controller $K_s$, $K=MN^{-1}$.
\State Apply the exploration input $u_k=K_ex_k+e_k$, $e_k\sim\mathcal{N}(0,\Sigma)$ for $k=0,\dots, T$.
\State Update estimates $\hat{A}_T,\,\hat{B}_T$ using new data.
\State Compute the equivalent state-feedback $K_{new}$ and apply the feedback $u_k=K_{\text{new}}x_k$, $k>T$.
\end{algorithmic}
\end{algorithm}

\subsection{Theoretical analysis}
\label{sec:dual_2}

The following result proves that Algorithm~\ref{alg:main} leads to a controller with closed-loop guarantees.

\begin{theorem}
\label{thm:main}
Let Assumptions~\ref{a1}--\ref{a2} hold, suppose~\eqref{eq:opt_problem} is feasible and Algorithm~\ref{alg:main} is applied. 
Assume further that the set inclusions~\eqref{delta0bound}--\eqref{deltasbound} from Proposition~\ref{p1} hold.
Then the state-feedback $K_{\text{new}}$ from~\eqref{eq:K_new} is well-defined and satisfies the quadratic performance bound~\eqref{eq:quad_perf}. 
\end{theorem}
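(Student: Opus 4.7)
The plan is to apply Lemma~\ref{lemma:robust} to the closed-loop system~\eqref{closedloop} with the choices $Q_s=-I,\,R_s=D_s^{-1},\,Q_u=-I,\,R_u=\overline{D}_T^{-1}$ (so that the ellipsoidal sets $\mathbf{\Delta}_s,\,\mathbf{\Delta}_u$ of the lemma reduce to $\Delta^\top\Delta\prec D_s^{-1}$ and $\Delta^\top\Delta\prec\overline{D}_T^{-1}$), and then to verify that the explicit state feedback $K_{\text{new}}$ from~\eqref{eq:K_new} realizes exactly the closed loop that Lemma~\ref{lemma:robust} certifies. Well-definedness of $K_{\text{new}}$ will follow from the equivalence in \cite[Thm.~2]{scherer2001lpv}, whose premise is already encoded in the feasibility of~\eqref{S3}.

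First I would check the two uncertainty inclusions. Under Assumption~\ref{a2}, substituting~\eqref{eq:a2} into the definition of $D_T$ together with the convex relaxation~\eqref{eq:explore_3} evaluated at $V=[I,\,K_0^\top]$ shows that constraint~\eqref{eq:opt_problem_D_T} guarantees $\overline{D}_T\preceq D_T$, hence $D_T^{-1}\preceq\overline{D}_T^{-1}$. Combined with the assumed inclusion~\eqref{ubound} from Proposition~\ref{p1} this yields $\Delta_u^\top\Delta_u\preceq D_T^{-1}\prec\overline{D}_T^{-1}$, so $\Delta_u\in\mathbf{\Delta}_u$ in the sense required by Lemma~\ref{lemma:robust}.

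Second I would turn to $\Delta_s$. The algebra preceding~\eqref{eq:opt_problem} (Schur complement and the Woodbury identity applied to $S_3$) shows that~\eqref{eq:opt_problem_D_s} is equivalent to $D_s^{-1}\succ\tfrac{1+\epsilon}{\epsilon}D_0^{-1}+(1+\epsilon)\overline{D}_T^{-1}$. Using $\overline{D}_T^{-1}\succeq D_T^{-1}$ this implies $D_s^{-1}\succ(1+\tfrac{1}{\epsilon})D_0^{-1}+(1+\epsilon)D_T^{-1}$, and chaining with the inclusion~\eqref{deltasbound} of Proposition~\ref{p1} gives $\Delta_s^\top\Delta_s\prec D_s^{-1}$, i.e., $\Delta_s\in\mathbf{\Delta}_s$. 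The feasibility of the main robust-control inequality~\eqref{S3} then delivers via Lemma~\ref{lemma:robust} the quadratic performance bound~\eqref{eq:quad_perf} for the closed loop~\eqref{closedloop} driven by $u_k=Kx_k+K_sw_k^s$ with $K=MN^{-1}$, for the realized $\Delta_s,\Delta_u$.

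Finally I would translate this LPV closed loop into the state feedback $u_k=K_{\text{new}}x_k$. Substituting $w_k^s=\Delta_sz_k^s=(\hat{A}_T-\hat{A}_0)x_k+(\hat{B}_T-\hat{B}_0)u_k$ into $u_k=Kx_k+K_sw_k^s$ and solving for $u_k$ reproduces exactly~\eqref{eq:K_new}, provided $I_m-K_s(\hat{B}_T-\hat{B}_0)$ is invertible; this well-posedness is guaranteed by \cite[Thm.~2]{scherer2001lpv}, which characterizes the same LMI~\eqref{eq:LMI_gain_schedule} that we are using. I expect the main subtleties to be twofold: first, carefully handling strict versus non-strict inequalities when composing the weak bounds from Proposition~\ref{p1} with the strict LMIs~\eqref{eq:opt_problem_D_s}--\eqref{eq:opt_problem_D_T} (the strictness of the SDP constraints absorbs the weak ellipsoidal inclusions); and second, cleanly invoking Assumption~\ref{a2} to legitimize $\overline{D}_T\preceq D_T$, since this rests on approximating the empirical covariance by its stationary worst-case surrogate $W_e$. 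The remaining manipulations are routine algebra.
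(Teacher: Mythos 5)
Your proposal is correct and follows essentially the same route as the paper's proof: invoke Lemma~\ref{lemma:robust} with $Q_s=Q_u=-I$, $R_s=D_s^{-1}$, $R_u=\overline{D}_T^{-1}$, establish $\Delta_u^\top\Delta_u\prec\overline{D}_T^{-1}$ via Assumption~\ref{a2}, the relaxation~\eqref{eq:explore_3} and constraint~\eqref{eq:opt_problem_D_T}, establish $\Delta_s^\top\Delta_s\prec D_s^{-1}$ via the Schur/Woodbury equivalence of~\eqref{eq:opt_problem_D_s} with~\eqref{eq:D_s_intermediate} together with Proposition~\ref{p1}, and obtain well-posedness of $K_{\text{new}}$ from the synthesis LMI via \cite[Thm.~2]{scherer2001lpv}. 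Your explicit tracking of the choice $V=[I,\,K_0^\top]$ and of strict versus non-strict inequalities only makes the paper's argument more detailed, not different.
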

\begin{proof}
First, we recap that Lemma~\ref{lemma:robust} guarantees the performance bound~\eqref{eq:quad_perf}, assuming suitable bounds on $\Delta_s,\Delta_u$. 
Then, we show that exploration inequalities in combination with Assumption~\ref{a2} ensure the bounds on $\Delta_s,\Delta_u$.\\
\textbf{Part I. }
According to Lemma~\ref{lemma:robust}, satisfaction of the matrix inequality~\eqref{S3} guarantees that the robust gain-scheduling controller $u=MN^{-1}x_k+K_sw_k$ ensures the quadratic performance bound~\eqref{eq:quad_perf}, if $\Delta_s^\top \Delta_s \prec D_s^{-1}$, $\Delta_u^\top \Delta_u\prec \overline{D}_T^{-1}$.
Moreover, it is a direct consequence of the synthesis LMI that $K_{\text{new}}$ is well-posed.
Thus, it only remains to show that $\Delta_s^\top \Delta_s \prec D_s^{-1}$, $\Delta_u^\top \Delta_u\prec \overline{D}_T^{-1}$.\\
\textbf{Part II. }
Since the set inclusions~\eqref{delta0bound}--\eqref{deltasbound} from Proposition~\ref{p1} hold, it suffices to show $\overline{D}_T^{-1}\succ D_T^{-1}$ and 
\begin{align}
\label{eq:D_s_intermediate}
D_s^{-1}\succ \frac{1+\epsilon}{\epsilon}D_0^{-1}+(1+\epsilon)\overline{D}_T^{-1},   
\end{align}
Assumption~\ref{a2} ensures that the bound~\eqref{eq:explore_2} holds. 
The convex relaxation~\eqref{eq:explore_3} (c.f.~\cite[Lemma 1]{ferizbegovic2019learning})  in combination with inequality~\eqref{eq:opt_problem_D_T} ensures that $D_T\succ \overline{D}_T$ and thus $\Delta_u^\top \Delta_u \prec \overline{D}_T^{-1}$. 
Finally, as shown earlier, inequality~\eqref{eq:opt_problem_D_s} is equivalent to~\eqref{eq:D_s_intermediate}, which implies $\Delta_s^\top \Delta_s\prec {D}^{-1}_s$.
\end{proof}

We point out that, the since the properties in Proposition~\ref{p1} only hold with some probability $1-\delta$, the quadratic performance~\eqref{eq:quad_perf} only holds with some probability, which is inherent in the considered stochastic/Gaussian setup.

\subsection{Discussion}
\label{sec:dual_3}
The proposed method detailed in Algorithm~\ref{alg:main} combines structured exploration techniques as developed in~\cite{umenberger2019robust,ferizbegovic2019learning} and robust gain scheduling controller design.
Given an initial data set (compare Assumption~\ref{a1}) and a quadratic performance specification $Q_p,S_p,R_p$ on the channel $w\mapsto z$, Theorem~\ref{thm:main} implies that Algorithm~\ref{alg:main} guarantees robust performance for the closed loop with input $u_k=K_{\text{new}}x_k$, after an initial exploration phase whose worst-case cost is minimized simultaneously with the controller design.
The influence of the exploring controller $u_k=K_ex_k+e_k$, $e\sim\mathcal{N}(0,\Sigma)$, on the performance after exploration is quantified by (approximately) predicting the future uncertainty depending on $K_e$ and $\Sigma$ via \eqref{eq:explore_2}--\eqref{eq:explore_3}. 

Compared to previous works~\cite{umenberger2019robust, ferizbegovic2019learning, barenthin2008identification, iannelli2019structured}, the key difference of the present approach is that the mean of the parameter estimates after exploration is taken into account by considering it as a scheduling variable via $w_k^s=\Delta_sz_k^s$.
Initially, it is only known that $\Delta_s\in\mathbf{\Delta}_s$ (compare Lemma~\ref{lemma:robust}), but after exploration $\Delta_s$ is available and can hence be exploited for controller design.
This is in contrast to existing works, which simply assumed $\Delta_s=0$, i.e., the mean value of the parameter estimates does not change over time.
An important observation is that, according to~\eqref{eq:K_new}, the state-feedback $K_{\text{new}}$ depends on $\hat{A}_T,\hat{B}_T$ and hence, on the data $\mathcal{D}_T$ obtained during time steps $0$ through $T$.
This means that the proposed controller explicitly exploits measurements during the exploration phase, which was not the case in~\cite{umenberger2019robust, ferizbegovic2019learning, barenthin2008identification, iannelli2019structured}. Furthermore, \cite{umenberger2019robust, ferizbegovic2019learning, barenthin2008identification, iannelli2019structured} require a repeated LMI based design after the exploration phase, which is not the case in our formulation wherein we pre-compute a closed-form solution that guarantees quadratic performance based on a predicted bound of the exploration data.

Theorem~\ref{thm:main} requires that Assumption~\ref{a1} holds, which is a non-restrictive condition on the initial data and parameters.
On the contrary, Assumption~\ref{a2} is essentially an approximation on the empirical covariance, which is required to predict the influence of the exploration phase on the parameter estimates.
While Assumption~\ref{a2} is generally not guaranteed to hold, it is approximately satisfied in practice and its validity can be verified a posteriori, i.e., after the exploration phase.

We briefly wish to elaborate on the impact of different values $c_\delta,\sigma_w$ corresponding to different noise and confidence levels. 
Assuming a fixed initial uncertainty $D_0$ is given, $c_\delta,\sigma_w$ have the same effect and only appear in~\eqref{eq:opt_problem_D_T} to determine $\overline{D}_T$. 
In case we increase $c_\delta$ and/or $\sigma_w^2$ (assuming $D_0$ is fixed), the optimal controller parameters $K,K_s$ resulting from~\eqref{eq:opt_problem} remain unchanged and only the cost of the exploration ($Y_e,W_e,Z_e,\Sigma$) increases proportionally. 
This is natural, as a higher noise level and/or a higher desired confidence level requires a stronger excitation to yield the required model quality. 
Thus, since the magnitude of noise and/or confidence level does not directly impact the resulting controller $K,K_s$ (although $K_{\text{new}}$ may change), the main structural property  that would \emph{qualitatively} change the shape of the resulting robust dual control strategy would be varying noise levels for the different states.

 \section{Conclusion}\label{sec:conclusion}
In this paper, we formulate a novel dual control approach for linear time-invariant systems with performance guarantees based on gain-scheduling.
We propose an LMI-based controller design procedure which simultaneously computes a controller to apply during an exploration phase as well as a robust controller for closed-loop performance after exploration.
Similar to~\cite{ferizbegovic2019learning}, the influence of the exploration on the closed loop is quantified by predicting the future uncertainty of the system parameters.
The key difference is that we account for the change in the mean estimate of system parameters after exploration by formulating an LPV closed-loop system and selecting the uncertain system parameters as a scheduling variable. 
In contrast to existing methods, the robust controller takes the estimates after exploration into account and therefore, it depends explicitly on the data obtained during exploration.
Finally, we prove desirable theoretical properties of the proposed approach.
An interesting issue for future research is a detailed comparison of the presented dual controller to existing alternatives.
\bibliographystyle{IEEEtran}  
\bibliography{lit}

\end{document}